\def\restriction#1#2{\mathchoice
              {\setbox1\hbox{${\displaystyle #1}_{\scriptstyle #2}$}
              \restrictionaux{#1}{#2}}
              {\setbox1\hbox{${\textstyle #1}_{\scriptstyle #2}$}
              \restrictionaux{#1}{#2}}
              {\setbox1\hbox{${\scriptstyle #1}_{\scriptscriptstyle #2}$}
              \restrictionaux{#1}{#2}}
              {\setbox1\hbox{${\scriptscriptstyle #1}_{\scriptscriptstyle #2}$}
              \restrictionaux{#1}{#2}}}
\def\restrictionaux#1#2{{#1\,\smash{\vrule height .8\ht1 depth .85\dp1}}_{\,#2}}
\newtheorem{theorem}{Theorem}
\newtheorem{lemma}[theorem]{Lemma}
\newtheorem{corollary}[theorem]{Corollary}
\newtheorem{definition}[theorem]{Definition}
\title{Labeled embedding of $(n,n-2)$-graphs  in their complements \protect\footnote{This work is partially supported by P2GE Rhone-Alpes Region project.}}
\author{M-A. Tahraoui$^1$ \hspace*{0.2cm} E. Duch\^ene$^1$ \hspace*{0.2cm} H. Kheddouci$^1$ \hspace*{0.2cm} M. Wo\'zniak$^2$}
\address{
$^{1}$Universit\'{e} de Lyon, CNRS, Universit\'{e} Lyon 1, LIRIS,\\ UMR5205, F-69622, France\\
$^{2}$Department of Discrete Mathematics, Faculty of Applied Mathematics,
AGH-University of Science and Technology, Al. Mickiewicza 30, 30059 Krakow, Poland.
}
\keywords{Packing of graphs, Labeled  packing, Permutation}
\begin{document}

\begin{abstract}
Graph packing generally deals with unlabeled graphs. In \cite{EHRT11}, the authors have introduced a new variant of the graph packing problem, called the \textit{labeled packing of a graph}. This problem has recently been studied on trees \cite{TDK13} and cycles \cite{EHRT11}. In this note, we present a lower bound on the labeled packing number of any $(n,n-2)$-graph into $K_n$. This result improves the bound given by Wo\'zniak in \cite{W94}.
\end{abstract}
\maketitle
\section{Context and definitions}
\subsection*{Graph theoretical definitions}
All graphs considered in this paper are finite, undirected, without loops or multiple edges. If $T$ is a rooted tree of order $n$, we define an \textit{end vertex} as a vertex which does not have any son, and a \textit{leaf-parent} as a vertex whose all of its sons are end vertices.\\
Given a positive integer $n$, the graphs $K_n$, $P_n$ and $C_n$ will denote respectively the complete graph, the path and the cycle on $n$ vertices.
For a graph $G$, we will use $V(G)$ and $E(G)$ to denote its vertex  and  edge sets respectively. Given $V' \subset V$, the subgraph $G[V']$ denotes the subgraph of $G$ induced by $V'$, i.e., $E(G[V'])$ contains all the edges of $E$ which have both end vertices in $V'$.  If a graph $G$ has order $n$ and size $m$, we say that $G$ is an
$(n,m)$-graph.\\
An independent set of $G$ is a subset of vertices $X \subseteq V$ , such that no two vertices in $X$ are adjacent. An independent set is said to be maximal if no independent set properly contains it. An independent set of maximum cardinality is called a maximum independent set.  For undefined terms, we refer the reader to \cite{BM76}.
A permutation $\sigma$ is a one-to-one mapping of $\{1,\ldots, n \}$ into itself.
We say that a permutation $\sigma$ is \textit{fixed-point-free} if $\sigma(x)\neq x $ for all $x$ of $\{1,\ldots, n \}$.

\subsection*{The graph packing problem}
The graph packing problem was introduced by Bollob\'{a}s and Eldridge \cite{BE78} and Sauer and Spencer \cite{SP78} in the late 1970s.
Let  $G_1,\ldots, G_k$ be $k$ graphs of order $n$.
We say that there is a packing of $G_1,\ldots,G_k$ (into the complete graph $K_n$ )
if there exist permutations  $\sigma_i : V(G_i)\longrightarrow V(K_n)$, where $1\leq i \leq k$,
such that $\sigma_{i}^* (E(G_i)) \cap \sigma_{j}^*(E(G_j))=\emptyset$ for $i\neq j $,
and here the map $\sigma_{i}^*: E(G_i ) \longrightarrow E(K_n)$ is the one induced by $\sigma_i$.
A packing of $k$ copies of a graph $G$ will be called a $k$-placement of $G$.
A packing of two copies of $G$ (\textit{i.e.,} a $2$-placement)
is also called an embedding of $G$ (into its complement $\overline{G}$).
In other words, an embedding of a graph $G$ is a permutation $\sigma$ on $V(G)$ such that if an edge $vu$ belongs to $E(G)$, then $\sigma(v) \sigma(u)$ does not belong to $E(G)$.\\

In the literature, the question of the existence of an embedding of a given graph received a great attention (see the survey papers \cite{W04,Y88}). In \cite{BS77}, full characterizations of all the $(n,n-1)$ and $(n,n)$ embeddable graphs are given. The case of $(n,n-2)$-graphs was also solved independently in \cite{BE78,BS77,SS78}. In particular, it is proved in \cite{SS78} that any pair of $(n,n-2)$-graphs can be packed into $K_n$. \\

In \cite{EHRT11}, Duch\^ene \textit{et al.} introduced and studied the graph packing problem for a vertex labeled graph. Roughly speaking, it consists of a graph packing which preserves the labels of the vertices. We give below the formal definition of this problem.
\begin{definition}[\cite{EHRT11}]
\label{Definition}
Given a positive integer $p$, let $G$ be a graph of order $n$ and $f$ be a mapping from $V(G)$ to the set $\{1,\ldots,p\}$. The mapping $f$ is called a $p$-labeled-packing of $k$ copies of $G$ into $K_n$ if there exist permutations $\sigma_i : V(G)\longrightarrow V(K_n)$ for $1\leq i\leq k$, such that:
\begin{enumerate}
 \item  $\sigma_{i}^* (E(G)) \cap \sigma_{j}^*(E(G))=\emptyset$ for all $i \neq j$.
 \item For every vertex $v$ of $G$, we have $f(v)$=$f(\sigma_{1}(v))$=$f(\sigma_{2}(v))=$ $\cdots$=$f(\sigma_{k}(v))$.
\end{enumerate}
\end{definition}

The maximum positive integer $p$ for which $G$ admits a \textit{$p$-labeled-packing} of $k$ copies of $G$ is called the \textit{labeled packing number} of $k$ copies of $G$ and is denoted by $\lambda^{k}(G)$. Throughout this paper, a labeled packing of two copies of $G$ will be called a labeled embedding of $G$. It will be denoted by a pair $(f,\sigma)$.

Remark that the existence of a packing of $k$ copies of a graph $G$ is a necessary condition for the existence of $p$-labeled-packing of $k$ copies of $G$. Indeed, it suffices to choose $p=1$. Therefore, the result of Sauer and Spencer \cite{SS78} ensures the existence of a $p$-labeled packing for $(n,n-2)$-graphs. An estimation of the labeled packing number of such graphs is the main issue of the current paper. \\

The following result was proved in \cite{EHRT11}. It gives an upper bound for the labeled packing number of two copies of a general graph.
\begin{lemma} [Duch\^ene et al., 11]
\label{upperbound}
Let $G$ be a graph of order $n$ and let $I$ be a maximum independent set of $G$. If there exists an embedding of $G$ into $K_n$, then
$$\lambda^2(G)\leq  |I| + \lfloor \frac {n - |I|} {2}\rfloor$$
\end{lemma}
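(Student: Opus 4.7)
The plan is to extract, from any labeled embedding $(f,\sigma)$, a structural constraint on the permutation $\sigma$ that limits how many distinct labels $f$ can possibly use, and then optimise this bound.

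First I would observe that condition (2) of Definition~\ref{Definition} forces $f$ to be constant on every orbit (cycle) of $\sigma$: since $f(v)=f(\sigma(v))$ for every $v$, iterating gives $f(v)=f(\sigma^k(v))$ for all $k$. Consequently the number of labels actually used by $f$ is at most the number of cycles of $\sigma$. This reduces the problem to bounding, from above, the maximum possible number of cycles of a permutation $\sigma$ that realises an embedding of $G$ into its complement.

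Next I would split the cycles of $\sigma$ into fixed points and cycles of length at least $2$. The key structural step, which I expect to be the only non-routine point, is the claim that the set $F$ of fixed points of $\sigma$ is an independent set of $G$. Indeed, if $u,v\in F$ were adjacent in $G$, then the edge $uv$ would be mapped by $\sigma^*$ to $\sigma(u)\sigma(v)=uv\in E(G)$, contradicting $\sigma^*(E(G))\cap E(G)=\emptyset$. This immediately yields $|F|\le |I|$, where $I$ is a maximum independent set of $G$.

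Finally I would count: the remaining $n-|F|$ vertices lie in cycles of length at least $2$, so they contribute at most $\lfloor (n-|F|)/2\rfloor$ cycles. Therefore the total number of cycles of $\sigma$, and hence the number of labels used by $f$, is at most
\[
|F|+\Bigl\lfloor\frac{n-|F|}{2}\Bigr\rfloor.
\]
A short check shows that the function $x\mapsto x+\lfloor(n-x)/2\rfloor$ is non-decreasing in $x$ (an increase of $1$ in $x$ either leaves the floor unchanged or decreases it by $1$). Plugging in $|F|\le |I|$ gives the desired inequality $\lambda^2(G)\le |I|+\lfloor(n-|I|)/2\rfloor$. The only conceptual ingredient is the independence of $F$; everything else is orbit-counting.
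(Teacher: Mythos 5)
Your proof is correct. The paper itself does not reprove this lemma (it is quoted from \cite{EHRT11}), but your argument is the standard one and matches the ingredients the paper relies on: the observation that $f$ must be constant on each cycle of $\sigma$ (stated explicitly in Section~2), the fact that the fixed points of an embedding form an independent set, and the monotonicity of $x\mapsto x+\lfloor(n-x)/2\rfloor$ to pass from $|F|$ to $|I|$. All three steps check out, so there is nothing to add.
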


In \cite{EHRT11}, exact values of $\lambda^2(G)$ are given when $G$ is a cycle or a path. In almost all cases, the upper bound of the above lemma is reached. More precisely, it is shown that for all $n\geq6$,
\[
\lambda^2(P_n)\in \{\lfloor\frac{3n}{4}\rfloor,\lfloor\frac{3n}{4}\rfloor+1\}
\]
\[
\lambda^2(C_n)= \lfloor\frac{3n}{4}\rfloor
\]
The case of trees is also considered \cite{TDK13}, but only a lower bound is proposed.

\section{Labeled embedding of graphs and permutations}
In this section, we give a strong relationship between a labeled embedding and its permutation structure.

%
%
\begin{figure}[!h]
\centering
\subfigure[]{\label{exp_a}\includegraphics*[height=1.8cm]{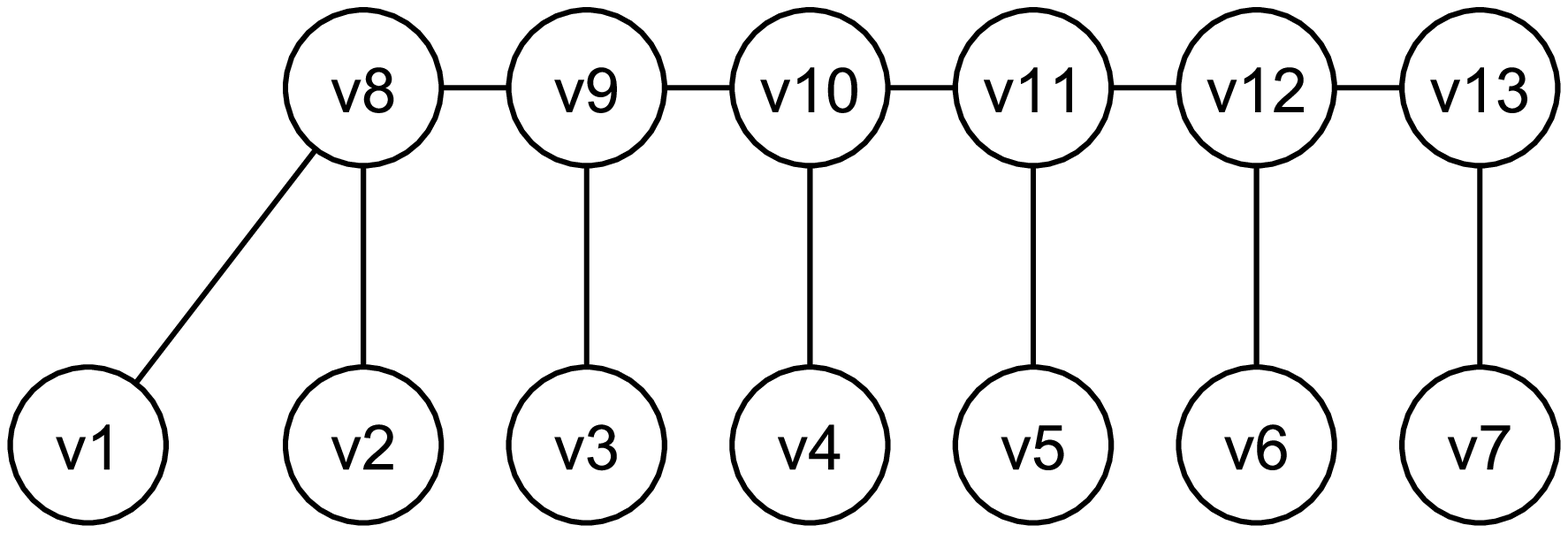}} \hspace{0.5cm}
\subfigure[]{\label{exp_b}\includegraphics*[height=2cm]{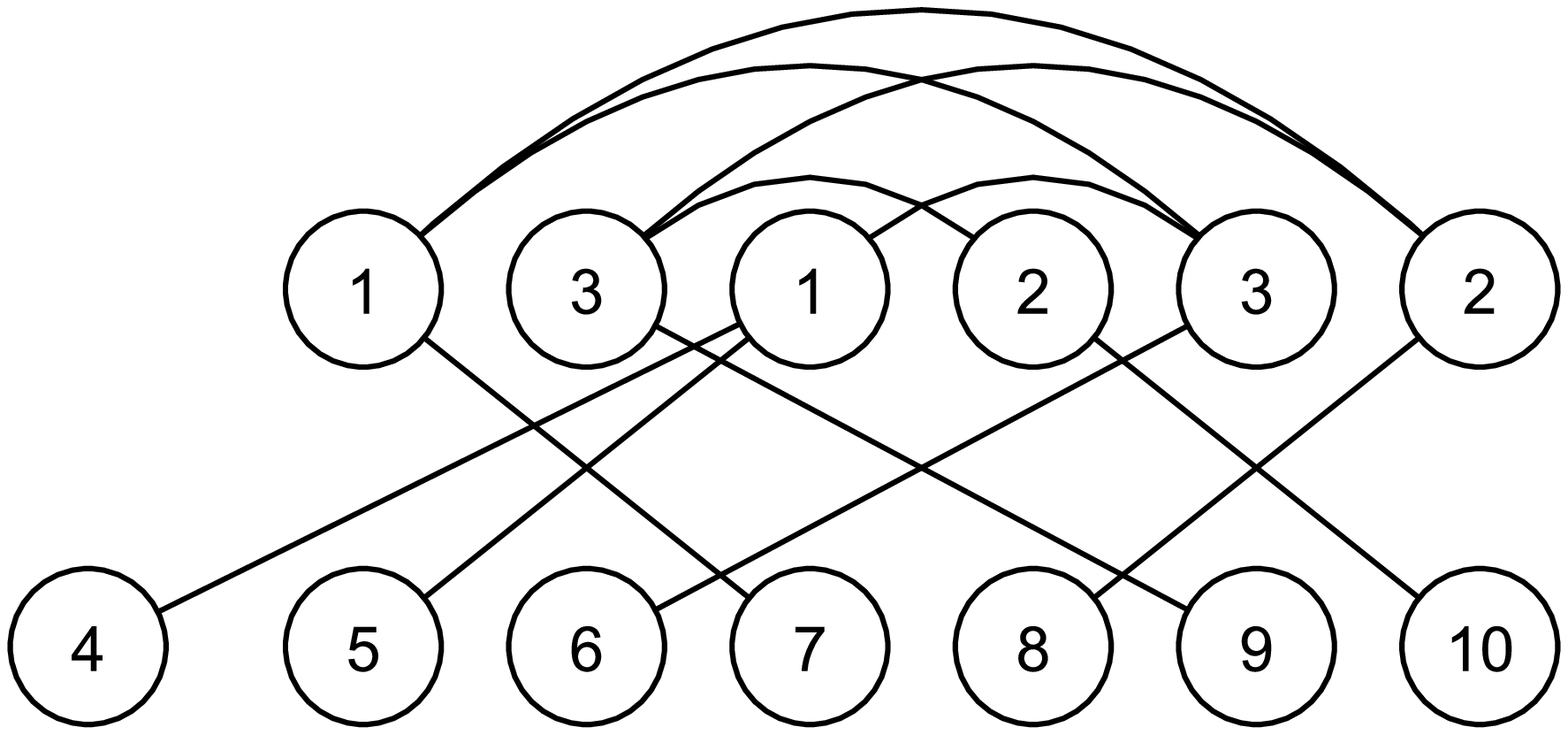}}
\caption{ (a) A caterpillar $T$,   (b) A $10$-labeled embedding  of $T$.}
\end{figure}

A permutation $\sigma$ of a finite set can be written as the disjoint union of cycles (two cycles being disjoint if they do not have any common element). Here, a cycle   $(a_1,\ldots, a_n)$  is a permutation sending $a_i$ to $a_{i+1}$ for $1 \leq i \leq n-1$ and $a_n$ to $a_1$. This representation is called the \textit{cyclic decomposition of $\sigma$} and is denoted by $C(\sigma)$. According to this definition, the cycles of length one correspond to fixed points of $\sigma$. For example, the cyclic decomposition of the permutation induced by the labeled embedding of $T$ (in Figure 1) is: $\{(v_1)$, $(v_2)$, $(v_3)$, $(v_4)$, $(v_5)$, $(v_6)$, $(v_7)$, $(v_8,v_{10})$,$(v_{11},v_{13})$, $(v_9,v_{12})\}$.

We now recall a fundamental property of labeled embeddings (see \cite{EHRT11}). For any labeled embedding $(f,\sigma)$ of a graph $G$, one can remark that the vertices of every cycle of $C(\sigma)$  share the same label. In other words, the labeled embedding number of $G$ exactly corresponds to the maximum number of cycles induced by an embedding of $G$. It means that if $G$ admits an embedding with $k$ cycles, then $\lambda^2(G)\geq k$.\\

Although this correlation between labeled embeddings and the permutation's number of cycles was rencently stated, several studies can be found about the permutation structure of an embedding. In particular, the permutation structure of embeddings of $(n,n-2)$-graphs was investigated by Wo\'{z}niak
in \cite{W94}:

\begin{theorem} [Wo\'{z}niak, 94]
Let $G$ be a graph of order $n$, different from $K_3 \cup 2K_1$ and $K_4 \cup 4K_1$. If
$|E(G)|\leq n-2$, then there exists a permutation $\sigma$ on $V(G)$ such that $\sigma_1$, $\sigma_2$, $\sigma_3$ define a $3$-placement of $G$. Moreover, $\sigma$ has all its cycles of length $3$, except for one of length one  if $n \equiv 1 \bmod 3$ or two of length one if $n \equiv 2 \bmod 3$.
\end{theorem}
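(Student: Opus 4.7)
The plan is to proceed by strong induction on $n$ in increments of three, peeling off three vertices at a time together with the $3$-cycle they contribute to $\sigma$. The key observation that makes the cycle structure match is that $n-3 \equiv n \pmod 3$, so if the induction hypothesis gives a permutation $\sigma'$ of $V(G)-\{a,b,c\}$ with the prescribed number of fixed points, then $\sigma := \sigma' \cup (a,b,c)$ automatically has the right type on $V(G)$.

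\textbf{Base cases.} First I would verify the statement by hand for $(n,n-2)$-graphs of small order, especially the residues $n\in\{3,4,5,6,7,8\}$ covering the three classes modulo $3$. The two exceptional graphs $K_3\cup 2K_1$ (order $5$) and $K_4\cup 4K_1$ (order $8$) must be singled out here: a short case analysis shows that no permutation of the required cyclic type yields a $3$-placement for them, while every other $(n,n-2)$-graph at these small orders does. Since $\sigma$ has only cycles of length $1$ or $3$, we have $\sigma^3=\mathrm{id}$, and hence $\sigma_3$ is the trivial embedding; the $3$-placement condition therefore collapses to pairwise edge-disjointness of $E(G)$, $\sigma^{*}(E(G))$, and $\sigma^{2*}(E(G))$, i.e.\ to the requirement that every $\sigma$-orbit on $E(K_n)$ contains at most one edge of $G$.

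\textbf{Inductive step.} Assume the theorem for all orders less than $n$ and pick an $(n,n-2)$-graph $G$ distinct from the two exceptions. The goal is to choose three vertices $a,b,c\in V(G)$ such that $G':=G-\{a,b,c\}$ has at most $(n-3)-2=n-5$ edges, is not one of the excluded graphs, and such that the cycle $(a,b,c)$ attaches cleanly to the inductively-furnished $\sigma'$. Since $|E(G)|\leq n-2$ forces $G$ to be disconnected and bounds the degree sum by $2(n-2)$, small-degree vertices are abundant. The case distinction I would make is: if $G$ has three or more isolated vertices and $|E(G)|\leq n-5$, take them as $\{a,b,c\}$ and the $3$-cycle is vacuous to verify; if $G$ has a $K_2$-component plus an isolated vertex, peel off those three; otherwise, pick a leaf $a$ of a tree-like component together with its neighbour $b$, and a third vertex $c$ (leaf or isolated) from another component so that removing them strips at least $|E(G)|-(n-5)$ edges. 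In every case the edges of $G$ touching $\{a,b,c\}$ are very few, so the three-way disjointness of $E(G)$, $\sigma^{*}(E(G))$ and $\sigma^{2*}(E(G))$ can be checked by a direct count, using the freedom to orient the $3$-cycle as $(a,b,c)$ or $(a,c,b)$.

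\textbf{Main obstacle.} The delicate part is the saturated regime $|E(G)|=n-2$ in which $G$ has very few low-degree vertices at hand, since the removal of three vertices must already account for three edges and leave a graph that is neither $K_3\cup 2K_1$ nor $K_4\cup 4K_1$. This is precisely where the two exceptions appear, and where a na\"{\i}ve triple $\{a,b,c\}$ can produce a residual edge $xy$ whose $\sigma$- or $\sigma^2$-image lies back inside $E(G)$ because, say, some $\sigma'(b_i)$ happens to be adjacent in $G$ to one of the new vertices $c$. Controlling this requires a finer analysis that uses the edge bound $n-2$ to ensure that for $n$ large at most one component can be dense enough to obstruct the extension, so one can always rebalance the choice of $\{a,b,c\}$ across components (or, failing that, locally modify $\sigma'$ on a neighbouring $3$-cycle) to avoid the collision. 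Formalising this balancing argument, together with the verification that no inductive branch reproduces an excluded graph, is the technical heart of the proof.
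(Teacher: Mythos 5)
First, a point of reference: the paper does not prove this statement at all --- it is quoted as a black box from Wo\'zniak's 1994 paper \cite{W94} --- so there is no in-paper proof to compare against, and your attempt must be judged on its own terms. On those terms there is a genuine gap: the inductive step is never actually carried out. Your three listed cases (three isolated vertices when $|E(G)|\le n-5$; a $K_2$-component plus an isolated vertex; a leaf--neighbour pair plus a third low-degree vertex from another component) do not exhaust the $(n,n-2)$-graphs. For instance $G=K_5\cup 7K_1$ has $n=12$ and $|E(G)|=10=n-2$: it has no $K_2$-component and no leaf, and although it has seven isolated vertices, removing any three of them leaves a $(9,10)$-graph to which the induction hypothesis does not apply, so one is forced to remove a vertex of the dense component --- a situation none of your cases addresses. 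More importantly, you explicitly defer the hard part: verifying that a triple $\{a,b,c\}$ can always be chosen so that $G'$ avoids the two exceptional graphs \emph{and} so that $E(G)$, $\sigma^{*}(E(G))$, $\sigma^{2*}(E(G))$ remain pairwise disjoint when edges run between $\{a,b,c\}$ and $G'$. Calling this ``the technical heart of the proof'' and leaving it unformalised means the proof is not there; that case analysis is exactly where Wo\'zniak's argument does its work and where the two exceptional graphs emerge.

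What you do have right are the easy reductions: since every cycle of $\sigma$ has length $1$ or $3$, $\sigma^{3}=\mathrm{id}$ and the $3$-placement condition collapses to pairwise disjointness of $E(G)$, $\sigma^{*}(E(G))$ and $\sigma^{2*}(E(G))$; the congruence $n-3\equiv n \pmod 3$ makes the cycle type propagate through the induction; and $|E(G)|\le n-2$ guarantees at least two tree components, hence a supply of low-degree vertices. These observations set up the right framework (and indeed mirror the ``remove a few vertices and extend the permutation'' strategy the present paper uses for its own Theorem~\ref{mainTheorem}), but without the exhaustive case analysis of how the removed triple interacts with its neighbourhood, the claim is not established.
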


According to our previous remarks, the above theorem induces the following result in the context of labeled embeddings.

\begin{corollary}\label{cor:woz}
Let $G$ be a graph of order $n$, different from $K_3 \cup 2K_1$ and $K_4 \cup 4K_1$. If
$|E(G)|\leq n-2$, then
$$ \lambda^2(G)\geq \lfloor \frac{n}{3} \rfloor + n \text{ mod } 3$$
\end{corollary}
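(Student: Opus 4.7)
The plan is to deduce the corollary directly from Wo\'zniak's theorem by translating the $3$-placement it provides into a labeled embedding via the cycle-counting principle mentioned just before the statement. First, I would apply the theorem to $G$ (which is permitted since $G \neq K_3 \cup 2K_1, K_4 \cup 4K_1$ and $|E(G)| \leq n-2$), obtaining a permutation $\sigma$ on $V(G)$ such that $\sigma_1, \sigma_2, \sigma_3$ form a $3$-placement of $G$. In particular, $\sigma_1$ and $\sigma_2$ alone form a $2$-placement, so $\sigma$ is already an embedding of $G$ into $\overline{G}$.

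Next, I would invoke the observation recalled from \cite{EHRT11}: if $G$ admits an embedding whose permutation has cyclic decomposition with $k$ cycles, then $\lambda^2(G) \geq k$ (assign to every vertex the index of its cycle as a label). Thus it only remains to count the cycles of $\sigma$ in each residue class of $n$ modulo $3$, using the structural statement of the theorem that all cycles of $\sigma$ have length $3$ except for the prescribed fixed points.

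A case analysis finishes the argument. If $n \equiv 0 \pmod 3$, then $\sigma$ has exactly $n/3$ cycles, matching $\lfloor n/3\rfloor + (n \bmod 3)$. If $n \equiv 1 \pmod 3$, then $\sigma$ has $(n-1)/3$ triples plus one fixed point, giving $\lfloor n/3\rfloor + 1$ cycles. If $n \equiv 2 \pmod 3$, then $\sigma$ has $(n-2)/3$ triples plus two fixed points, giving $\lfloor n/3\rfloor + 2$ cycles. In every case the number of cycles equals $\lfloor n/3\rfloor + (n \bmod 3)$, and the inequality $\lambda^2(G) \geq \lfloor n/3\rfloor + (n \bmod 3)$ follows.

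Since all the heavy lifting is done by Wo\'zniak's theorem and by the elementary correspondence between cycles and labels, there is no genuine obstacle here; the only point requiring a bit of care is verifying that the arithmetic in each residue class lines up with the expression $\lfloor n/3 \rfloor + n \bmod 3$, which is a short check.
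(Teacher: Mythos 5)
Your proposal is correct and is exactly the argument the paper intends: the corollary is stated as an immediate consequence of Wo\'zniak's theorem combined with the remark that an embedding whose permutation has $k$ cycles yields $\lambda^2(G)\geq k$, and your cycle count in each residue class matches. The paper gives no further proof, so there is nothing to add.
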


In the next section, we will show that the lower bound of Corollary \ref{cor:woz} can be improved (including for the excluded graphs).

\section{Main result}
We first define the notion of {\it good permutation} for a graph.
\begin{definition}
  Given a graph $G$, a permutation $\sigma$ on $V(G)$ is said to be \textit{good} if \begin{itemize}
\item $\sigma$ is an embedding of $G$,
\item $\sigma$ has at least $\lfloor \frac{2n}{3} \rfloor$ cycles,
\item every cycle of $\sigma$ is of order at most $2$, \textit{i.e.,} \textit{for every pair of distinct vertices  $u, v$ of $G$, if   $\sigma(u)=v$, then $\sigma(v)=u$}.
\end{itemize}
\end{definition}

The following lemma will be useful in a special case of our main result.
\begin{lemma}\label{lem:kC3}
  For $k>0$, the graph $kC_3\cup 2K_1$ admits a good permutation.
\end{lemma}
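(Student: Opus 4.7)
Write $n=3k+2$, so that the target number of cycles is $\lfloor 2n/3\rfloor = 2k+1$. Label the vertices of the $i$-th triangle by $a_i,b_i,c_i$ for $1\le i\le k$, and the two isolated vertices by $x$ and $y$. The plan is to exhibit a single explicit involution $\sigma$ that meets all three requirements of a good permutation by fixing every $a_i$ and arranging the remaining $2k+2$ vertices in a ``chain'' of transpositions that uses $x$ and $y$ as its endpoints.

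Concretely, I would define $\sigma$ as the involution whose nontrivial cycles are
\[
(b_1\,x),\ (c_1\,b_2),\ (c_2\,b_3),\ \ldots,\ (c_{k-1}\,b_k),\ (c_k\,y),
\]
together with the fixed points $a_1,\ldots,a_k$ (for $k=1$ the chain reduces to $(b_1\,x)(c_1\,y)$). This immediately takes care of two of the three conditions in the definition of a good permutation: every cycle has length at most $2$, and there are exactly $k+(k+1)=2k+1$ cycles, which equals $\lfloor 2n/3\rfloor$.

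It then remains to verify that $\sigma$ is an embedding of $G=kC_3\cup 2K_1$. The edges of $G$ split into three families: $a_ib_i$, $a_ic_i$, and $b_ic_i$. For an edge of the first kind, the image is $a_i\sigma(b_i)$, and by construction $\sigma(b_i)\in\{x,c_{i-1}\}$; since neither $x$ nor $c_{i-1}$ lies in $N_G(a_i)=\{b_i,c_i\}$, the image is a non-edge. The argument for $a_ic_i$ is symmetric, using $\sigma(c_i)\in\{b_{i+1},y\}$. For $b_ic_i$, its image $\sigma(b_i)\sigma(c_i)$ reduces to $xy$ when $k=1$, and otherwise is one of $x\,b_2$, $c_{i-1}\,b_{i+1}$ (whose endpoints lie in the distinct triangles $T_{i-1}$ and $T_{i+1}$), or $c_{k-1}\,y$; none of these is an edge of $G$.

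The only real obstacle is finding a pairing compatible with the involution constraint. Symmetric choices such as $(b_i\,b_{i+1})(c_i\,c_{i+1})$ or $(b_i\,c_{i+1})(c_i\,b_{i+1})$ fail systematically, because each carries the edge $b_ic_i$ onto an edge of the neighbouring triangle. The chain construction above avoids this by shifting every $c_i$ onto the $b_{i+1}$ of the next triangle, so that each $b_ic_i$ is sent to a pair straddling two distinct triangles, while the two isolated vertices $x$ and $y$ serve as the endpoints of the chain, absorbing $b_1$ and $c_k$.
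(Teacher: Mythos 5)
Your construction is correct. The involution with fixed points $a_1,\ldots,a_k$ and transpositions $(b_1\,x),(c_1\,b_2),\ldots,(c_{k-1}\,b_k),(c_k\,y)$ has exactly $2k+1=\lfloor 2n/3\rfloor$ cycles, all of length at most $2$, and your case check of the three edge families is complete: every image of an edge either involves an isolated vertex or straddles two distinct triangles (or a triangle and an isolated vertex), so it is never an edge of $G$. The paper proceeds quite differently: it gives explicit good permutations for $k=1,2,3$ (the latter two via figures), observes that a good permutation of $3C_3$ exists, and then for $k>3$ assembles a good permutation of $kC_3\cup 2K_1$ as a disjoint union of good permutations of copies of $3C_3$ and of $rC_3\cup 2K_1$ with $r\in\{1,2,3\}$ (which works because the cycle counts add up exactly to $\lfloor 2n/3\rfloor$ and edges never cross between components). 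Your single uniform ``chain'' construction buys a shorter, figure-free, and fully verifiable argument with no case split on $k$; the paper's modular approach buys reusability of small building blocks at the cost of relying on pictures for the base cases and an implicit check that goodness is preserved under disjoint union. Either proof is acceptable; yours is arguably cleaner for this particular lemma.
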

\begin{proof}
According to the diagram below (Figure \ref{Case3C3}), first remark that $3C_3$ admits a good permutation. Indeed, the numbers inside the vertices correspond to a labeled embedding with $6$ labels, with at most two vertices sharing the same label.
\begin{figure}[h]
\centering
\includegraphics [scale=.4] {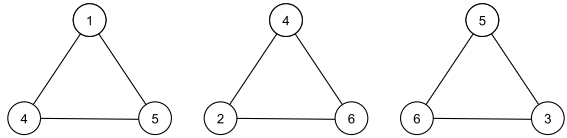}
\caption{Good permutation for $3C_3$ }
\label{Case3C3}
\end{figure}

Now let $k$ be a positive integer and $G$ be the graph $kC_3\cup 2K_1$. Let $u$ and $t$ be the two isolated vertices of $G$. For $1\leq i\leq k$, let $\{v_{i1},v_{i2},v_{i3}\}$ be the vertices of the $i^{th}$ triangle $C_3$. For $k=1$, consider the permutation $\sigma$ where $v_{11}$ is a fixed point, $v_{12}$ and $u$ are mutual images, as well as $v_{13}$ and $t$. One can easily check that $\sigma$ is good for $G$. For $k=2$, Figure~\ref{Case2C32K1} shows a good permutation (more precisely, the corresponding labeled embedding).
\begin{figure}[h]
\centering
\includegraphics [scale=.4] {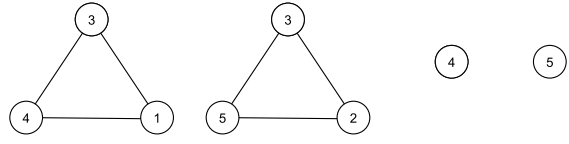}
\caption{Good permutation for $2C_3\cup 2K_1$ }
\label{Case2C32K1}
\end{figure}

For $k=3$, consider the permutation $\sigma$ corresponding to the labeled embedding of Figure~\ref{Case3C3}, and extend it to $G$ by setting $\sigma(u)=u$ and $\sigma(t)=t$. Then $\sigma$ remains good for $G$.
For $k>3$, we can now conclude to the existence of a good permutation for $G$ by pairing good permutations of $3C_3$ with a good permutation of $rC_3\cup 2K_1$ where $r$ is in $\{1,2,3\}$.
\end{proof}

We now present a lower bound for the labeled embedding number of any $(n,n-2)$-graph.

\begin{theorem}
\label{mainTheorem}
Let $n>2$ and $G$ be an $(n,m)$-graph with $m\leq n-2$. The following inequality holds:
\[
           \lambda^2(G)\geq \lfloor  \frac{2n}{3} \rfloor
\]
\end{theorem}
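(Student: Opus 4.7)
The strategy is to prove the stronger statement that every such graph $G$ admits a good permutation; by the correspondence between labeled embeddings and cycle numbers recalled in Section~2, this immediately yields the lower bound $\lambda^2(G)\geq\lfloor 2n/3\rfloor$. I would proceed by strong induction on $n$, with base cases (small $n$) verified exhaustively. A useful preliminary observation for the inductive step is that any $(n,m)$-graph with $m\leq n-2$ has at least two tree components: indeed $\sum_i(n_i-m_i)=n-m\geq 2$ summed over the connected components, and each component contributes at most $1$, with equality iff it is a tree. Writing $k_0$ for the number of isolated vertices and $k_1$ for the number of leaves, this already forces $2k_0+k_1\geq 4$, so there is always some low-degree structure to exploit.

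The plan for the inductive step is to select a set $S\subseteq V(G)$ of size $2$ or $3$, define a partial involution $\sigma|_S$ explicitly that contributes at least two cycles, and apply the induction hypothesis to $G-S$ to extend $\sigma$ to all of $V(G)$. Three requirements have to be checked: (a) $G-S$ is an $(n-|S|,m')$-graph with $m'\leq(n-|S|)-2$, so that the induction hypothesis applies; (b) the total cycle count reaches $\lfloor 2n/3\rfloor$, which a direct computation modulo~$3$ confirms, since two cycles on $2$ or $3$ removed vertices always bridge $\lfloor 2(n-|S|)/3\rfloor$ to $\lfloor 2n/3\rfloor$; and (c) the edges of $G$ incident to $S$ are sent to non-edges under the combined permutation.

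The case analysis then follows the structure of $G$. If two isolated vertices $u,v$ are present and the remaining edges satisfy $m\leq n-4$, take $S=\{u,v\}$ as two fixed points. If there is a tree component of order $\geq 2$ with a leaf $l$ whose parent $p$ has $\deg(p)$ large enough to reduce the edge count sufficiently (typically $\deg(p)\geq 3$ in the tightest regime $m=n-2$), take $S=\{l,p,v\}$ for a suitable $v$ (an isolated vertex or a leaf in another component), swap $l$ with $v$, and fix $p$. If there are two tree components of order $\geq 2$, pair two of their leaves across components. The exceptional family where every tree component is a single isolated vertex and every non-tree component is a triangle coincides with $G=kC_3\cup 2K_1$, and is dispatched directly by Lemma~\ref{lem:kC3}.

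The main technical difficulty I expect is ensuring condition (c), i.e.\ the compatibility between the partial involution on $S$ and the good permutation obtained by induction on $G-S$. For example, fixing the parent $p$ in the leaf case forces the induced permutation on $G-S$ to send $N(p)\setminus\{l\}$ into the non-neighborhood of $p$, a requirement that cannot be imposed directly on the inductive output. Overcoming it will probably require either a strengthening of the induction hypothesis (asserting, for instance, that one may choose a good permutation with a prescribed fixed point provided mild local conditions on its neighborhood hold), or a refinement of the peeling set $S$ that leaves no vertex of $S$ with an external neighbor. The very existence of the irreducible family $kC_3\cup 2K_1$, where no such peeling can succeed without breaking a triangle, confirms that handling this compatibility is the crux of the proof and justifies the separate role of Lemma~\ref{lem:kC3}.
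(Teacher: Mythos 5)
Your overall strategy is exactly the paper's: prove by induction on $n$ that every $(n,m)$-graph with $m\leq n-2$ admits a good permutation (an embedding with all cycles of length at most $2$ and at least $\lfloor 2n/3\rfloor$ of them), peel off a small set $S$ of low-degree vertices, extend the inductive permutation by an involution on $S$, and dispatch the irreducible family $kC_3\cup 2K_1$ by Lemma~\ref{lem:kC3}. Your preliminary observations (two tree components, the mod-$3$ bookkeeping showing two new cycles on $|S|\in\{2,3\}$ vertices suffice) are correct and match the paper.

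However, there is a genuine gap: you correctly identify the compatibility condition (c) as the crux, but you do not resolve it --- you only speculate that one would need either a strengthened induction hypothesis or a peeling set with no external neighbours. The paper does neither. Its resolution is to make the extension \emph{adaptive}: the involution on $S$ is chosen as a function of how $\sigma'$ acts on the neighbours of $S$ that remain in $G'$. For instance, in the paper's Subcase~1.1 one removes $\{x_1,x_2,y_1\}$ where $x_0$ is the external neighbour of $x_1$; if $\sigma'(x_0)=x_0$ one swaps $x_1$ with the foreign leaf $y_1$ and fixes $x_2$, while if $\sigma'(x_0)\neq x_0$ one fixes $x_1$ and swaps $x_2$ with $y_1$. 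The point is that ``all cycles of $\sigma'$ have length at most $2$'' makes the dichotomy ``fixed point versus swapped'' exhaustive and checkable, so one of the two prewired involutions always avoids creating a conflicting edge. Without this (or an equivalent) mechanism your inductive step does not close. A second, smaller gap is your claim that the only irreducible configuration is $kC_3\cup 2K_1$: when $G$ is a union of cycles (not all triangles) plus two isolated vertices, your $|S|\in\{2,3\}$ peeling does not obviously apply --- the paper needs a separate construction removing six vertices at once, plus explicit ad hoc permutations for $C_m\cup 2K_1$ with $4\leq m\leq 7$ (its Cases 4.2.b and 4.2.c). So the proposal is the right plan but is not yet a proof.
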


\begin{proof}
Let $n>2$ and $G$ be an $(n,m)$-graph with $m\leq n-2$.  Without loss of generality, we can assume $|E(G)|=n-2$. We will show that $G$ admits a good permutation by induction on $n$.
If $n=3,4$, then $G \in \{3K_1, K_1 \cup K_2, 2K_2, K_{1,2}\cup K_1\}$. In each case, one can quickly check that there exist good permutations with at least two cycles. The property still holds for $n=5$,  where $G\in\{ K_3 \cup 2K_1,K_1 \cup K_{1,3},  K_2 \cup K_{1,2}, P_4 \cup K_1\}$. Good permutations with at least three cycles can be found.

Now let $n\geq 6$ and assume there exists a good permutation for every $(n',n'-2)$-graph of order $n'<n$ with $n'\geq 3$. Since $G$ is an $(n,n-2)$-graph, at least two of its connected components are trees. Denote by $T$ and $H$ two trees of $G$ of higher order such that $|V(T)|\geq |V(H)|$. In what follows, we choose to consider $T$ and $H$ as rooted trees. We consider the following four cases:\\

\noindent \textbf{Case 1:} $|V(T)|\geq 3$ and $|V(H)|\geq 2$. Hence $T$ admits a leaf parent of degree at least 2. Now there are two subcases:

\noindent \textbf{Subcase 1.1:} $T$ admits a leaf-parent, say $x_1$, of degree $2$. Let $x_0$ and $x_2$ be the two vertices of $T$ such that $(x_0,x_1,x_2)$ is an induced path of $T$ and $x_2$ is an end vertex. Let $y_1$ be an end vertex of $H$ and $y_0$ its parent. Now consider the graph $G'=G \setminus \{x_1,x_2,y_1\}$. Clearly, $G'$ is an $(n-3,n-5)$-graph with $n-3\geq 3$. Hence the induction hypothesis guarantees the existence of a good permutation $\sigma'$ for $G'$. This  permutation can be extended to a good permutation $\sigma$ for $G$ as follows:\\

\begin{tabular}{c c}
$\sigma(x_1)=
     \begin{cases}
            y_1  & \text{ if } \sigma'(x_0)=x_0,\\
            x_1 & otherwise.
     \end{cases}$ &
     $\sigma(x_{2})=
     \begin{cases}
            x_2   &  \text{ if } \sigma'(x_0)=x_0,\\
            y_1 &  otherwise.
     \end{cases}$\\    \\

     $\sigma(y_{1})=
     \begin{cases}
            x_1   &  \text{ if } \sigma'(x_0)=x_0,\\
            x_2 &  otherwise.
     \end{cases}$&
     $\sigma(v)=\sigma'(v) \text { if } v\in V(G')$
\end{tabular}\\

Since the number of cycles of $\restriction{\sigma}{G\setminus G'}$ equals two, and they all are of length at most $2$, it ensures that $\sigma$ is a good permutation for $G$.

\noindent \textbf{Subcase 1.2:} $T$ has a leaf-parent, say $x_0$, of degree at least three. Thus $x_0$ is adjacent to at least two leaves, say $x_1$ and $x_2$. Let $y_1$ be a leaf vertex of $H$ and $y_0$ its parent. We consider the graph $G'=G \setminus \{x_1,x_2,y_1\}$. The induction hypothesis guarantees the existence of a good permutation $\sigma'$ for $G'$. This  permutation  can be extended to a good permutation $\sigma$ for $G$ as follows: for every vertex $v\in V(G')\setminus\{x_0\}$, $\sigma(v)=\sigma'(v)$ and
\begin{itemize}
\item[$\bullet$] If $\sigma'(x_0)=x_0$ and $\sigma'(y_0)=y_0$:
\begin{center}
$\sigma(x_0)=y_1$, $\sigma(y_1)=x_0$, $\sigma(x_1)=x_1$ and $\sigma(x_2)=x_2$.
\end{center}

\item[$\bullet$] If $\sigma'(x_0)=x_0$ and $\sigma'(y_0) \neq y_0$:
\begin{center}
$\sigma(x_1)=x_2$, $\sigma(x_2)=x_1$, $\sigma(y_1)=y_1$ and $\sigma(x_0)=x_0$.
\end{center}

\item[$\bullet$] If $\sigma'(x_0)\neq x_0$ and  $\sigma'(x_0)\neq y_0$:
\begin{center}
$\sigma(x_1)=y_1$, $\sigma(y_1)=x_1$, $\sigma(x_2)=x_2$ and $\sigma(x_0)=\sigma'(x_0)$.
\end{center}

\item[$\bullet$] If $\sigma'(x_0)=y_0$:
\begin{center}
$\sigma(x_1)=x_1$, $\sigma(y_1)=y_1$, $\sigma(x_2)=x_2$ and $\sigma(x_0)=\sigma'(x_0)$.
\end{center}
\end{itemize}
For the same reasons as in Subcase $1.2$, the permutation $\sigma$ is good for $G$.\\

\noindent \textbf{Case 2:} $|V(T)|\geq 3$ and $H=K_1$. We consider several subcases:\\
\noindent \textbf{Subcase 2.1:} $T$ has a vertex, say $x$, of degree at least $3$ which is adjacent to a leaf. Let $\ell$ be such a leaf, and $y$ be the unique vertex of $H$. Now consider the graph $G'=G\setminus\{x,\ell,y\}$, which admits a good permutation $\sigma'$ by induction hypothesis. A good permutation $\sigma$ of $G$ can thus be extended from $G'$ by setting $\sigma(x)=y$, $\sigma(y)=x$ and $\sigma(\ell)=\ell$.

\noindent \textbf{Subcase 2.2:} All the vertices of $T$ which are adjacent to leaves are of degree $2$. Let $x_0$ be such a vertex (it exists since $|V(T)|\geq 3$), let $\ell_1$ be its adjacent leaf, and $x_1$ its second neighbor. Now let $\ell_2$ be a distinct leaf from $\ell_1$ in $T$, and $x_2$ be its neighbor. If $x_1\neq \ell_2$ and $x_1\neq x_2$, then consider $G'=G\setminus\{x_0,\ell_1,\ell_2\}$, which admits a good permutation $\sigma'$ by induction hypothesis. Then set $\restriction{\sigma}{G'}=\sigma'$ and
\begin{itemize}
\item[$\bullet$] If $\sigma'(x_1)=x_1$:
\begin{center}
$\sigma(x_0)=\ell_2$, $\sigma(\ell_2)=x_0$, and $\sigma(\ell_1)=\ell_1$.
\end{center}
\item[$\bullet$] If $\sigma'(x_1)\neq x_1$:
\begin{center}
$\sigma(x_0)=x_0$, $\sigma(\ell_2)=\ell_1$, and $\sigma(\ell_1)=\ell_2$.
\end{center}
\end{itemize}
One can now easily check that $\sigma$ is good for $G$.
If $x_1= \ell_2$ or $x_1= x_2$, then $T$ is either a $P_3$ or a $P_4$. Since $n\geq 6$, it implies that $G$ admits at least another connected component which is an $(n,n-1)$ or an $(n,n)$ connected graph. In other words, this component is either a tree $T'$, or a tree with an edge $T'\cup \{e\}$. Let $\ell_3$ be a leaf in $T'$. Note that we do not care whether $\ell_3$ is adjacent to $e$ or not. By considering $G'=G\setminus\{x_0,\ell_1,\ell_3\}$ together with the above permutation where $\ell_2$ is replaced by $\ell_3$, we find a good permutation for $G$.\\

\noindent \textbf{Case 3:} $|V(T)|=2$. Let $T=(x_0,x_1)$ and let $y$ be a vertex of degree $2$ of $G$. Such a vertex exists since $n\geq 6$. Consider the graph $G'=G\setminus \{x_0,x_1,y\}$. By induction hypothesis, there exists a good permutation for $G'$,  say $\sigma'$. We set $\sigma(x_0)=y$, $\sigma(y)=x_0$, $\sigma(x_1)=x_1$ and for every vertex  $v \in V(G')$, $\sigma(v)=\sigma'(v)$, which defines a good permutation for $G$.\\

\noindent \textbf{Case 4:} $|V(T)|=1$. In this case, $G$ contains isolated vertices (at least two) and non-tree connected components. Two subcases are considered as follows:

\noindent \textbf{Subcase 4.1:} \textit{$G$ has a vertex, say $x$, of degree at least $3$.} Let $y$ and $z$ be two isolated vertices of $G$. Consider the graph $G'=G\setminus  \{x,y,z\}$. The induction hypothesis guarantees the existence of a good permutation $\sigma'$ for $G'$. By putting $\sigma(x)=y$, $\sigma(y)=x$, $\sigma(z)=z$ and for every vertex  $v \in V(G')$, $\sigma(v)=\sigma'(v)$, we get a good permutation for $G$.

\noindent \textbf{Subcase 4.2:} \textit{The complementary subcase to (4.1), i.e., $G$ is the sum of two isolated vertices and an union of cycles.} This case is solved as follows:

(a) $G=kC_3 \cup 2K_1$ for some $k\geq 1$. Lemma~\ref{lem:kC3} allows us to conclude.

(b) \textit{$G$ has at least one cycle, say $H$, of order at least $4$, and one cycle, say $Q$, of order at least $3$}: let $(x_1,x_2,x_3)$ be an induced path of $H$, let $x_4$ be a vertex of $Q$ and $z,t$ be the two isolated vertices of $G$. Denote by $x$ (resp. $y$) the neighbor of $x_1$ (resp. $x_3$) different from $x_2$. Note that we may have $x=y$ in the case $H=C_4$. See Figure~\ref{caseb} for a graphical depiction of these notations.

\begin{figure}[h]
\centering
\includegraphics [scale=.34] {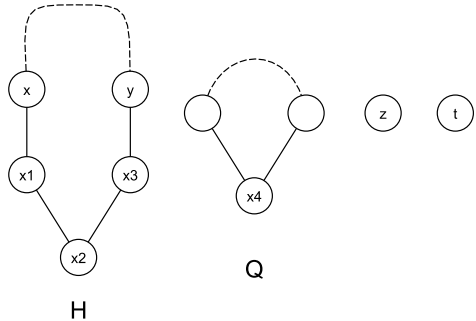}
\caption{Case (4.2.b)}
\label{caseb}
\end{figure}

Consider the graph $G'=G \setminus \{x_1,x_2,x_3,y,z,t\}$. Since $|V(G)|\geq 9$, we have $|V(G')|\geq 3$ and the induction hypothesis guarantees the existence of a good permutation $\sigma'$ for $G'$. The permutation $\sigma'$ can be extended to a good permutation $\sigma$ of $G$ by setting $\sigma(t)=t$, and
\begin{table}[!h]
\begin{tabular}{c c}
$\sigma(x_1)=
     \begin{cases}
            x_1  & \text{ if } \sigma'(x)\neq x\text{ and }\sigma'(y)\neq y,\\
            x_4 &  \text{ if } \sigma'(x)= x,\\
            z & otherwise.
     \end{cases}$ &
     $\sigma(x_{2})=
     \begin{cases}
              x_4  & \text{ if } \sigma'(x)\neq x\text{ and }\sigma'(y)\neq y,\\
            x_2 & otherwise.
     \end{cases}$\\   \\
         $\sigma(x_3)=
     \begin{cases}
              x_3  & \text{ if } \sigma'(x)\neq x\text{ and }\sigma'(y)\neq y,\\
            z &  \text{ if } \sigma'(x)= x,\\
            x_4 & otherwise.
     \end{cases}$&
      $\sigma(x_4)=
     \begin{cases}
              x_2  & \text{ if } \sigma'(x)\neq x\text{ and }\sigma'(y)\neq y,\\
            x_1 &  \text{ if } \sigma'(x)= x,\\
            x_3 & otherwise.
     \end{cases}$ \\ \\

           $\sigma(z)=
     \begin{cases}
              z  & \text{ if } \sigma'(x)\neq x\text{ and }\sigma'(y)\neq y,\\
            x_3 &  \text{ if } \sigma'(x)= x,\\
            x_1 & otherwise.
     \end{cases}$&
\end{tabular}
\end{table}

Hence $\restriction{\sigma}{G\setminus G'}$ has four cycles of size at most $2$, and $\sigma$ is thus good for $G$.

(c) $G$ is the sum of $C_m$ (for some $m\geq 4$) and two isolated vertices. If $m<8$, then Figure~\ref{fig:cm_2k1} shows labeled embeddings corresponding to good permutations.
\begin{figure}[!h]
\centering
\includegraphics [scale=.45] {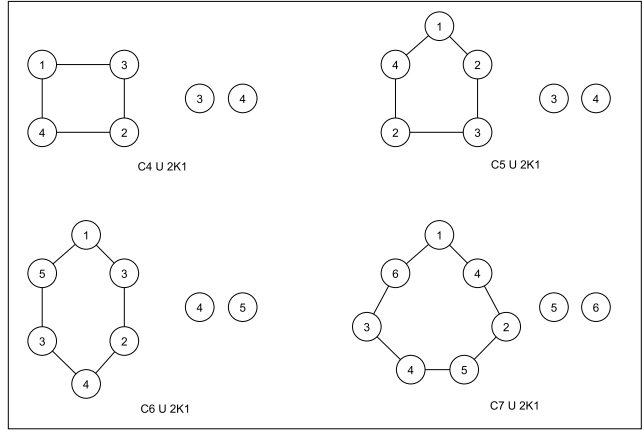}
\caption{Case $C_m\cup 2K_1$ for $m=4,\ldots,7$}
\label{fig:cm_2k1}
\end{figure}

If $m\geq 8$, let $(x_1,\ldots,x_8)$ be a path of $C_m$. Let $z,t$ be the two isolated vertices of $G$. We consider the graph $G'=G\setminus \{v_2,v_3,v_6,v_7,z,t\}$ which admits a good permutation $\sigma'$ by induction hypothesis. Since $v_4$ and $v_5$ are adjacent, at least one of them is not a fixed point under $\sigma'$. Without loss of generality, assume $\sigma'(x_4)\neq x_4$. The permutation $\sigma'$ can be extended to a good permutation $\sigma$ for $G$ as follows: set $x_3$ and $t$ as fixed points. If $\sigma'(x_5)\neq x_1$, we set $\sigma(x_2)=x_6$, $\sigma(x_6)=x_2$, $\sigma(x_7)=z$, and $\sigma(z)=x_7$. Otherwise, we set $\sigma(x_2)=x_7$, $\sigma(x_7)=x_2$, $\sigma(x_6)=z$, and $\sigma(z)=x_6$. For the same reasons as in case $(4.2.b)$, this permutation is good for $G$.

\end{proof}

\section*{Conclusion}
Theorem \ref{mainTheorem} gives a first lower bound about the labeled embedding number of $(n,n-2)$-graphs. Yet, the computation of the exact value remains an open question, as this bound is not exact for many families of $(n,n-2)$-graphs. As an example, consider a cycle $C_n$ without two edges. Its labeled packing number is at least the one of $C_n$, (i.e., $\lfloor 3n/4 \rfloor$). Yet, for any large value of $n$, we can find an $(n,n-2)$-graph for which the bound is tight. Indeed, consider $G$ as an union of $k$ disjoint triangles with $K_2\cup K_1$. The size of a maximum independent set for this graph equals $k+2$. According to Lemma~\ref{upperbound}, we have that $\lambda_2(G)=2k+2=\lfloor 2n/3 \rfloor$.

In addition, we mention that this result can be used to study the labeled embedding of $(n,n-1)$-graphs. One can show for example that the same bound is valid for the union of cycles with a single tree.
%
%


\begin{thebibliography}{99}
\bibitem{BE78}
B. Bollob\'{o}s and S. E. Eldridge, Packing of graphs and applications to computational complexity. J. Comb. Theory (B)25, 105-124 (1978).

\bibitem{BM76}

J. A. Bondy and U. S. R. Murty, Graph Theory with Applications, McMillan, London; Elsevier, New York, 1976.


\bibitem{BS77}
D. Burns and S. Schuster, Every $(p,p-2)$ graph is contained in its complement, J. Graph Theory 1 (1977) 277-279.



\bibitem {EHRT11}

E. Duch\^ene, H. Kheddouci,  R.J. Nowakowski  and  M.A. Tahraoui, Labeled  packing of graphs. Australasian Journal of Combinatorics, 57 (2013), 109-126.


\bibitem{SP78}

N. Sauer and J. Spencer, Edge disjoint placement of graphs. J. Combin. Theory Ser. B 25 (1978), 295-302.


\bibitem{SS78}
N. Sauer and J. Spencer, Edge disjoint placement of graphs, J. Combin. Theory Ser. B,. 25 (1978)



\bibitem {TDK13}
M. A. Tahraoui, E. Duch\^ene and H. Kheddouci, Labeled embeddings of trees, preprint.



\bibitem {W94}
M. Wo\'{z}niak, Embedding Graphs of Small Size. Discrete Applied Mathematics 51(1-2): 233-241, 1994.


\bibitem{W04}

M. Wo\'{z}niak, Packing of graphs and permutations$-$a survey, Discrete Mathematics. 276 (1-3), (2004) 379-391.



\bibitem{Y88}

H. P. Yap, Packing of graphs$-$a survey, Discrete Mathematics. 72 (1988) 395-404.


\end{thebibliography}
\end{document}